\newtheorem{definition}{Definition}[section]
\newtheorem{proposition}[definition]{Proposition}
\newtheorem{example}[definition]{Example}
\newenvironment{proof}{\noindent\textbf{Proof}\itshape}{~\hfill~$\Box$ \newline\newline}
\def\after{\mbox{\textnormal{\sf After}}}
\def\datalog{\mbox{\textsc{datalog}}}
\def\dom{\mbox{\textnormal{dom}}}
\def\edb{\mbox{\textit{EDB}}}
\def\ic{\mbox{\textit{IC}}}
\def\idb{\mbox{\textit{IDB}}}
\def\metaequiv{\Leftrightarrow}
\def\optimize{\mbox{\textnormal{\sf Optimize}}}
\def\param#1{{\mathbf #1}}
\def\provfbr{\mbox{$\;\vdash_R\;$}}
\def\rewritenospace{\Rightarrow}
\def\rewrite{\;\; \rewritenospace \;\;}
\def\simplanguage{{\cal L}_{\mbox{\tiny{\sf S}}}}
\def\simpextlanguage{{\cal L}_{\mbox{\tiny{{\sf H}}}}}
\def\simp{\mbox{\textnormal{\sf Simp}}}
\def\true{\mbox{\textit{true}}}
\def\unfold{\mbox{\textnormal{\sf Unfold}}}
\def\vars{\mbox{\textnormal{vars}}}
\def\vec{\mathaccent"017E }
\providecommand{\keywords}[1]{\textbf{\textit{Keywords---}} #1}
\begin{document}

\title{Simplified integrity checking for an expressive class of denial constraints}

\author{Davide Martinenghi}

\affil{Politecnico di Milano, DEIB\\Piazza Leonardo 32, 20133 Milan, Italy.\\email:
\url{davide.martinenghi@polimi.it}
}

\date{}

\maketitle

\begin{abstract}
Data integrity is crucial for ensuring data correctness and quality, maintained through integrity constraints that must be continuously checked, especially in data-intensive systems like OLTP. While DBMSs handle common constraints well, complex constraints often require ad-hoc solutions. Research since the 1980s has focused on automatic and simplified integrity constraint checking, leveraging the assumption that databases are consistent before updates. This paper discusses using program transformation operators to generate simplified integrity constraints, focusing on complex constraints expressed in denial form. In particular, we target a class of integrity constraints, called extended denials, which are more general than tuple-generating dependencies and equality-generating dependencies.
These techniques can be readily applied to standard database practices and can be directly translated into SQL.
\end{abstract}

\keywords{skyline, flexible skyline, top-$k$ query, random access}

\pagestyle{plain}

\section{Introduction}

Data integrity is a ubiquitous concern that is at the basis of crucial objectives such as data correctness and data quality.
Typically, a specification of the correct states of a database is given through integrity constraints, which can be thought of as formulas that need to be kept satisfied throughout the life of the database.
In any data-intensive scenario, such as an OLTP system, date are updated all the time and, therefore, integrity constraints need to be checked continuously. While DBMSs are able to optimally handle common constraints such as key and foreign key constraints, the standard practice is to resort to ad-hoc solutions (for instance, at the application level or through manually designed triggers) when the integrity constraints to be maintained are more complex.
Automatic integrity constraint checking, and, in particular, simplified integrity constraint checking is a topic that has spurred numerous research attempts since the 1980s with the objective of mechanically producing the best possible set of conditions to check against a database in the face of updates.
The main idea, common to the majority of the approaches, is to exploit the assumption that a database is \emph{consistent} (i.e., satisfies the integrity constraints) before an update; with this, the formula to be checked, corresponding to the original integrity constraints, may be simplified so as to avoid checks of subformulas that are already guaranteed to hold thanks to the consistency assumption.

For instance, consider the relation $b(I,T)$, indicating that there is a book with an ISBN $I$ and a title $T$. The fact that the ISBN univocally identifies a book can be expressed as the integrity constraint $T_1=T_2\leftarrow b(I,T_1) \land b(I,T_2)$ (with all variables implicitly universally quantified). Now, assume that the tuple $b(i,t)$ is added to the database. Under the consistency assumption, one just needs to check that there is no other book with the same ISBN $i$ and a different title than $t$,\footnote{We are using set semantics, so having more than one occurrence of $b(i,t)$ is harmless.} which could be expressed by the simplified formula $T=t\leftarrow b(i,T)$. Without the consistency assumption, instead, we would have had to check satisfaction of the original constraint for all the books already present in the database.

In this paper, we discuss the use of program transformation operators to the generation of simplified integrity constraints. We focus on databases and constraints for which we can implement these operators in terms of rewrite rules based on resolution, subsumption and replacement of specific patterns.
In particular, we concentrate on integrity constraints that are more expressive than tuple-generating dependencies (TGDs) and equality-generating dependencies (EGDs), which we express in denial form with negated existential quantifiers, possibly nested.

The techniques described here can be immediately applied to standard database practice, since our logical notation can be easily translated into SQL.
Our results are mainly taken from an unpublished chapter of the Ph.D. dissertation~\cite{M:PHD2005}. To the best of our knowledge, to date there is no other approach that has considered such an expressive class of constraints for the problem of simplified integrity constraint checking.

\section{Relational and deductive databases}\label{cha:relational-and-deductive}

In this chapter we present the basic notation related to deductive databases, and use the syntax of the $\datalog$ language as a basis to formulate the main concepts.
We refer to standard texts in logic, logic programming and databases
such as \cite{DBLP:books/cs/Ullman88,NM95} for further
discussion.
As a notational convention, we use lowercase letters to denote \emph{predicates} ($p$, $q$,
$\ldots$) and \emph{constants} ($a$, $b$, $\ldots$) and uppercase letters
($X$, $Y$, $\ldots$) to denote \emph{variables}.
A \emph{term} is either a variable or a constant; conventionally, terms are also denoted by lowercase letters ($t$, $s$, $\ldots$) and sequences of terms are indicated by
vector notation, e.g., $\vec t$.
The language also includes standard connectives ($\land$, $\lor$, $\lnot$, $\leftarrow$), quantifiers, and punctuation.

In quantified formulas such as $(\forall X F)$ and $(\exists X F)$, $X$ is said to be
\emph{bound}; a variable not bound is said to occur \emph{free}. A formula with
no free variables is said to be
\textit{closed}.
Free variables are indicated in
boldface ($\param a$, $\param b$, $\dots$) and referred to as
\emph{parameters}\label{page:parameter-introduced};
the other variables, unless otherwise stated, are
assumed to be universally quantified.

Formulas of the form $p(t_1,\ldots, t_n)$ are called
\emph{atoms}. An atom preceded by a $\lnot$ symbol is a
\emph{negated atom}; a
\emph{literal} is either an atom or a negated atom.

Predicates are divided into three pairwise disjoint sets:
\emph{intensional}\index{predicate!intensional predicate},
\emph{extensional}\index{predicate!extensional predicate}, and
\emph{built-in}\index{predicate!built-in predicate} predicates.
Intensional and extensional predicates are collectively called
\emph{database predicates}\index{predicate!database predicate};
atoms and literals are classified similarly according to their
predicate symbol. There is one built-in binary predicate for term
equality ($=$)\index{$=$ (equality)}\index{term!equality}, written
using infix notation;
$t_1\neq t_2$\index{$\neq$ (non-equality)}\index{term!non-equality}
is a shorthand for $\lnot (t_1 = t_2)$ for any two terms $t_1$,
$t_2$.

A \emph{substitution} $\{\vec X/\vec t\}$ is a mapping from
the variables in $\vec X$ to the terms in $\vec t$; its application is the expression arising when each
occurrence of a variable in $\vec X$ is simultaneously replaced by
the corresponding term in $\vec t$.
A formula or term which contains no
variables is called \emph{ground}. A
substitution $\{\vec X/\vec Y\}$ is called a
\emph{renaming} iff $\vec Y$ is a permutation of
$\vec X$. Formulas $F$, $G$ are \emph{variants} of
one another if $F=G\rho$ for some renaming $\rho$.
A substitution $\sigma$ is said to be more general than $\theta$ iff there exists a substitution $\eta$ such that $\theta=\sigma\eta$.
A \emph{unifier} of 
$t_1,\dots,t_n$ is a substitution $\sigma$ such that
$t_1\sigma=\dots=t_n\sigma$;
$\sigma$ is a \emph{most general unifier} (mgu) of
$t_1,\dots,t_n$ if it is more general than any other unifier thereof.

A \emph{clause} is a disjunction of literals $L_1\lor\dots\lor L_n$.
In the context of deductive databases, clauses are expressed in implicational form,
$A\leftarrow L_1\land\cdots\land L_n$,
where $A$ is an atom and $L_1,\ldots,L_n$ are literals; $A$ is
called the \emph{head} and $L_1\land\cdots\land
L_n$ the \emph{body} of the clause. The head is optional and when it is omitted the
clause is called a \emph{denial}.
A \emph{rule} is a clause whose head is intensional, and
a \emph{fact} is a clause whose head is extensional and
ground and whose body is empty (understood as $\true$). Two clauses
are \emph{standardized apart} if they have no bound variable in common.
As customary, we also assume
clauses to be \emph{range
restricted}, i.e., each variable must occur in a positive database literal in the body.

Deductive databases are characterized by three components: facts,
rules and integrity constraints. An \emph{integrity
constraint} can, in general, be any
(closed) formula. In the context of deductive databases, however, it
is customary to express integrity constraints in some canonical
form; we adopt here the denial form, which gives a clear indication
of what must not occur in the database.

\begin{definition}[Schema, database]\label{def:schema-database}
A database {\em schema} $S$ is a pair
$\langle \idb, \ic\rangle$, where $\idb$ (the \emph{intensional
database}) is a finite set of
range restricted rules, and $\ic$ (the {\em constraint
theory}\index{constraint theory}) is a finite set of denials.
A {\em database} $D$ on $S$ is a pair $\langle \idb,
\edb\rangle$, where $\edb$ (the \emph{extensional
database}) is a finite set of
facts;
$D$ is said to be {\em based on} $\idb$.
\end{definition}
If the $\idb$ is understood, the database is identified with
$\edb$ and the schema with $\ic$.

For the semantics of a deductive database, we refer to standard texts, such as~\cite{Llo87}.
There are database classes for which the existence
of a unique \emph{intended} minimal
model is guaranteed. One such class is that of \emph{stratified databases}, i.e., those in which no predicate depends negatively on itself. This can be checked syntactically by building a graph with a node for each predicate and an arc between $p$ and $q$ each time $p$ is in the body and $q$ in the head of the same clause; the arc is labeled with a ``$-$'' if $p$ occurs negated. The database is stratified if the graph has no cycle including arcs labeled with a ``$-$''.

Stratified databases admit a unique ``intended''
minimal model,
called the \emph{standard model}~\cite{DBLP:books/mk/minker88/AptBW88,DBLP:journals/jlp/AptB94}.
Other minimal models are known, e.g., the \emph{perfect
model} for \emph{locally stratified
deductive databases} \cite{Prz88}, the \emph{stable
model} semantics \cite{GL88} and the
\emph{well-founded model} semantics
\cite{GRS88}.
We adopt the semantics of stratified databases and write $D\models \phi$, where $D$ is a database and $\phi$ a
closed formula, to indicate that $\phi$ holds in $D$'s standard model.

We now introduce a notation for querying and updating a database.
A \emph{query}\index{query} is an expression of the form $\Leftarrow
A$, where $A$ is an atom whose predicate is intensional.
When no ambiguity arises, a given query may be indicated
directly by means of its \emph{defining
formula} (instead of the predicate
name).
\begin{definition}[Update]\label{def:update}\index{update}
An {\em update}\index{update!predicate update} for an
extensional predicate $p$ in a database $D$ is an expression of the
form $p(\vec X) \Leftarrow p'(\vec X)$ where $\Leftarrow p'(\vec X)$
is a query for $D$. For an update $U$, the {\em
updated}\index{database!updated database} database $D^U$ has the same EDB as $D$ 
but in which, for every extensional predicate $p$ updated as $p(\vec X) \Leftarrow p'(\vec X)$ in $U$, the subset
$\{p(\vec t) \mid D \models p(\vec t)\}$ of $\edb$ is replaced by
the set $\{p(\vec t) \mid D \models p'(\vec t)\}$.
\end{definition}

\begin{example}\label{ex:updates}
Consider $\idb_1 = \{p'(X)\leftarrow
p(X),\;\;\;\; p'(X)\leftarrow X= a\}.$
The following update $U_1$ describes the addition of fact $p(a)$:
$\{p(X)\Leftarrow p'(X)\}.$
For convenience, wherever possible, defining formulas instead of queries
will be written in the body of predicate updates.
Update $U_1$ will, e.g., be indicated as follows:
$\{p(X) \Leftarrow p(X) \lor X= a\}.$
\end{example}

The constraint verification problem may be formulated as follows.
Given a database $D$, a constraint theory $\Gamma$ such that
$D \models \Gamma$, and an update $U$, does $D^U\models\Gamma$ hold
too?

Since checking directly whether $D^U \models \Gamma$ holds may be
too expensive, we aim to obtain a constraint theory $\Gamma^U$, called a \emph{pre-test}, such
that $D^U\models \Gamma$ iff $D\models \Gamma^U$ and $\Gamma^U$ is
easier to evaluate than $\Gamma$.

\section{A simplification procedure for obtaining a pre-test}\label{cha:procedure}
We report here the main steps of the simplification procedure described in~\cite{M:PHD2005}, whose operators will be used as building blocks in the next section to handle an expressive class of integrity constraints.
The procedure consists of two main steps: $\after$ and $\optimize$.
The former builds a schema that, while applying to the current state, represents the given integrity constraints in the state after the update. The latter eliminates
redundancies from $\after$'s output and exploits the consistency assumption.

A mere condition for checking integrity without actually executing the update is given by the notion of weakest precondition~\cite{D76,Qia88,H69}.
\begin{definition}[Weakest precondition]
Let $S=\langle\idb,\Gamma\rangle$ be a schema and $U$ an update. A
schema $S'=\langle\idb',\Gamma'\rangle$ is a
\emph{weakest precondition (WP)}\index{weakest
precondition}\index{2wp@WP (weakest precondition)} of $S$ with
respect to $U$ whenever $D\models\Gamma'\metaequiv D^U\models\Gamma$
for any database $D$ based on $\idb\cup\idb'$.
\end{definition}
To employ the consistency assumption, we extend the class of
checks of interest as follows.

\begin{definition}[Conditional weakest
precondition]\label{def:conditiona-weakest-precondition}
Let $S=\langle\idb,\Gamma\rangle$ be a schema and $U$ an update. Schema $S'=\langle\idb',\Gamma'\rangle$
 is a \emph{conditional
weakest precondition (CWP)}\index{conditional weakest
precondition}\index{2cwp@CWP (conditional weakest precondition)} of
$S$ wrt $U$ whenever $D\models\Gamma'\metaequiv
D^U\models\Gamma$ for any database $D$ such that $D\models\Gamma$.
\end{definition}
A WP is also a CWP
but the reverse does not necessarily hold.
\begin{example}
Consider the update and IDB from Example \ref{ex:updates} and let
$\Gamma_1$ be the constraint theory $\{\leftarrow p(X)\land q(X)\}$
stating that $p$ and $q$ are mutually exclusive. Then $\{\leftarrow
q(a)\}$ is a CWP (but not a WP) of $\Gamma_1$ with respect to $U_1$.
\end{example}

We now formally characterize a language, that we
call $\simplanguage$, on which a core simplification procedure can
be applied.
We introduce now a few technical notions that are needed in order to
precisely identify the class of predicates, integrity constraints
and updates that are part of $\simplanguage$, which is a
non-recursive, function-free language equipped with negation.
\begin{definition}[Starred dependency graph]\label{def:dependency-graph-starred}
Let $S=\langle \idb, \Gamma\rangle$ be a sche\-ma in which the
$\idb$ consists of a set of disjunctive (range restricted) predicate
definitions and $\Gamma$ is a set of range restricted denials. Let
$\cal G$ be a graph that has a node $N^p$ for each predicate $p$ in
$S$ plus another node named $\perp$, and no other node; if $p$'s
defining formula has variables not occurring in the head, then $N^p$ is
marked with a ``*''. For any two predicates $p$ and $p'$ in $S$,
$\cal G$ has an arc from $N^{p}$ to $N^{p'}$ for each occurrence of
$p$ in an $\idb$ rule in which $p'$ occurs in the head; similarly,
there is an arc from $N^{p}$ to $\perp$ for each occurrence of $p$
in the body of a denial in $\Gamma$. In both cases the arc is
labelled with a ``$-$'' (and said to be \emph{negative}) iff $p$
occurs negatively. $\cal G$ is the \emph{starred dependency
graph}\index{starred dependency graph}\index{dependency
graph!starred dependency graph} for $S$.
\end{definition}
\begin{example}\label{ex:dependency-graph-starred}
Let $S_1=\langle\idb_1,\Gamma_1\rangle$ and
$S_2=\langle\idb_2,\Gamma_2\rangle$ be schemata with
$$\begin{array}{rll}
\idb_1 & = \{ &s_1(X) \leftarrow r_1(X,Y) \},\\
\Gamma_1 &= \{&\leftarrow p_1(X) \land \lnot s_1(X)\},\\
\idb_2 & = \{ &s_2(X) \leftarrow r_2(X,Y),\\
&&q_2(X) \leftarrow \lnot s_2(X)\land t_2(X)\},\\
\Gamma_2 &= \{&\leftarrow p_2(X) \land \lnot q_2(X)\}.\\
\end{array}$$
The starred dependency graphs of $S_1$ and $S_2$ are shown in Figure
\ref{fig:dependency-graph}.
\end{example}
\begin{figure}[t]
\centerline{\setlength{\unitlength}{1cm}
\begin{picture}(0,0)
\put(-5,0){${\cal G}_1$}
\put(-4,0){\circle{.5}}
\put(-4.15,-.10){$\perp$}
\put(-3.6,.1){$-$}
\put(-3,0){\circle{.5}}
\put(-3.15,-.10){$s_1^*$}
\put(-4,-1){\circle{.5}}
\put(-4.1,-1.10){$p_1$}
\put(-2,0){\circle{.5}}
\put(-2.1,-.1){$r_1$}
\put(-3.25,0){\vector(-1,0){.5}}
\put(-2.25,0){\vector(-1,0){.5}}
\put(-4,-.75){\vector(0,1){.5}}
\put(3.5,0){${\cal G}_2$}
\put(0,0){\circle{.5}}
\put(-0.15,-.10){$\perp$}
\put(.4,.1){$-$}
\put(1.4,.1){$-$}
\put(1,0){\circle{.5}}
\put(.90,-.10){$q_2$}
\put(2,0){\circle{.5}}
\put(1.85,-.10){$s_2^*$}
\put(0,-1){\circle{.5}}
\put(-.10,-1.10){$p_2$}
\put(3,0){\circle{.5}}
\put(2.9,-.1){$r_2$}
\put(.75,0){\vector(-1,0){.5}}
\put(1.75,0){\vector(-1,0){.5}}
\put(2.75,0){\vector(-1,0){.5}}
\put(0,-.75){\vector(0,1){.5}}
\put(0.85,-1.10){$t_2$}
\put(1,-1){\circle{.5}}
\put(1,-0.75){\vector(0,1){.5}}
\end{picture}}\vspace{1cm}
  \caption{The starred dependency graphs for the schemata of Example~\ref{ex:dependency-graph-starred}.\label{fig:dependency-graph}}
\end{figure}
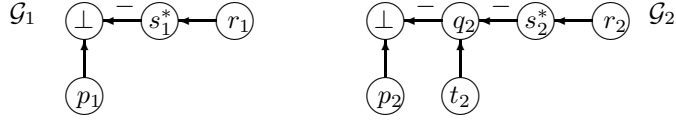
\begin{definition}[$\simplanguage$]\label{def:simp-language}\index{$\simplanguage$}
A schema 
$S$ is in
$\simplanguage$ if
its starred dependency graph $\cal G$ is acyclic and
in every path in $\cal G$ from a starred node to $\perp$ the
number of arcs labelled with ``$-$'' is even.
An update $\{p_1(\vec
X_1)\Leftarrow q_1(\vec X_1), \; \dots, \; p_m(\vec X_m)\Leftarrow
q_m(\vec X_m) \}$
is in $\simplanguage$ if the graph
obtained from $\cal G$ by adding an arc from $N^{q_i}$ to $N^{p_i}$,
$1\leq i \leq m$, meets the above condition.
\end{definition}
Acyclicity corresponds to absence of recursion.
The second condition requires
that the unfolding of the intensional predicates in the constraint
theory does not introduce any
negated existentially quantified variable. In
particular, starred nodes correspond to literals with
existentially quantified variable (called \emph{non-distinguished}).
In order to avoid negated existential quantifiers, the number of ``$-$'' signs must be even, as an even number of
negations means no negation. This requirement will be removed in the next section.
\begin{example}\label{ex:updated-dependency-graph}
Consider $S_1$ and $S_2$ from Example
\ref{ex:dependency-graph-starred}. Clearly $S_1$ is not in
$\simplanguage$, as in its starred dependency graph ${\cal G}_1$ in
Figure \ref{fig:dependency-graph} there is a path from $s_1^*$ to
$\perp$ containing one ``$-$'' arc, while $S_2$ is in
$\simplanguage$, as in ${\cal G}_2$ the only path from $s_2^*$ to
$\perp$ contains two ``$-$'' arcs.
\end{example}

\subsection{Generating weakest
preconditions}\label{sec:generating-wp}
The following syntactic transformation, $\after$, generates a WP.
\begin{definition}[$\after$]\label{def:after-general}\index{1after@$\after$}
Let $S=\langle\idb,\Gamma\rangle$ be a schema and $U$ an update $p(\vec X) \Leftarrow p^U(\vec X)$.
\begin{itemize}
\item Let us indicate with $\Gamma^U$ a copy of $\Gamma$ in which
any atom $p(\vec t)$ is simultaneously
replaced by the expression $p^U(\vec t)$ and every intensional
predicate $q$ is simultaneously replaced by a new intensional
predicate $q^U$ defined in $\idb^U$ below.
\item Similarly, let us indicate with $\idb^U$ a copy of $\idb$ in
which the same replacements are simultaneously made,
and let $\idb^*$ be the biggest subset of $\idb\cup\idb^U$ including
only definitions of predicates on which $\Gamma^U$ depends.
\end{itemize}
We define $\after^U(S) = \langle\idb^*,\Gamma^U\rangle$.
\end{definition}
\begin{example}\label{ex:after}
Consider the updates and $\idb$ definitions of
Example~\ref{ex:updates}. Let schema
$S_1$ be $\langle\idb_1,\Gamma_1\rangle$, where $\Gamma_1
=\{\leftarrow p(X)\land q(X)\}$ states that $p$ and $q$ are mutually
exclusive. We have
$\after^{U_1}(S_1)=\langle\idb_1,\Gamma_1^{U_1}\rangle$, where
$\Gamma_1^{U_1} =\; \{\leftarrow p'(X)\land q(X)\}.$
We replace $p'(X)$ by its defining formula
and thus omit the intensional database:
$$\after^{U_1}(\Gamma_1)=\Sigma = \begin{array}[t]{clll}
      \{ & \leftarrow & p(X)\land q(X), & \\
         & \leftarrow & X=a \land q(X) & \}.
\end{array}$$
\end{example}

Clearly, $\after^U(S)$ is a WP of $S$ with respect to $U$, and, trivially, a CWP.

For $\simplanguage$, we use \emph{unfolding}~\cite{DBLP:journals/jlp/Bol93a} to replace every
intensional predicate by its definition until only extensional
predicates appear. To this end, we use the
$\unfold_{\simplanguage}$ operator below.
\begin{definition}[Unfolding]\label{def:unfolding}\index{1unfold@$\unfold_{\simplanguage}$}
Let $S=\langle \idb, \Gamma\rangle$ be a database schema in
$\simplanguage$. Then, $\unfold_{\simplanguage}(S)$ is the schema
$\langle\emptyset,\Gamma'\rangle$, where $\Gamma'$ is the set of
denials obtained by iterating the two following steps as long as
possible:
\begin{enumerate}
\item replace, in $\Gamma$, each atom $p(\vec t)$ by $F^p \{\vec X / \vec t\}$, where 
$F^p$ is $p$'s defining formula and $\vec X$ its head
variables. If no replacement was made, then stop;
\item transform the result into a set of denials
according to the following patterns:
\begin{itemize}
\item  $\leftarrow A\land(B_1\lor B_2)$ is replaced by $\leftarrow
A\land B_1$ and $\leftarrow A\land B_2$;
\item  $\leftarrow A\land\lnot(B_1\lor B_2)$ is replaced by
$\leftarrow A\land\lnot B_1\land \lnot B_2$;
\item  $\leftarrow A\land\lnot(B_1\land B_2)$ is replaced by
$\leftarrow A\land \lnot B_1$ and $\leftarrow A\land \lnot B_2$.
\end{itemize}
\end{enumerate}
\end{definition}
Due to the implicit outermost universal quantification of the
variables, non-distinguished variables in a predicate definition are
existentially quantified to the right-hand side of the arrow, as
shown in example \ref{ex:non-distinguished-vars} below. For this
reason, with no indication of the quantifiers, the replacements in
definition \ref{def:unfolding} preserve equivalence iff no predicate
containing non-distinguished variables occurs negated in the
resulting expression.
\begin{example}\label{ex:non-distinguished-vars}
Consider $S_1$ from Example~\ref{ex:dependency-graph-starred},
which, as shown, is not in $\simplanguage$. With the explicit
indication of the quantifiers we have
$ \idb_1 \equiv \{ \forall X (s_1(X) \leftarrow \exists Y \;
r_1(X,Y)) \}.$
The replacement, in $\Gamma_1$, of $s_1(X)$ by its definition in
$\idb_1$ would determine the formula
$\Gamma_1' = \{\forall X,Y(\leftarrow p_1(X) \land \lnot
r_1(X,Y))\}.$
However, this replacement is not equivalence-preserving, because a
predicate ($r_1$) containing a non-distinguished variable occurs
negated:
$\Gamma_1 \equiv \{\forall X(\leftarrow p_1(X) \land \lnot \exists
Y\; r_1(X,Y))\} \not\equiv \Gamma_1'.$
\end{example}
We shall cover the cases in which $\lnot \exists$ may
occur in denials in Section \ref{cha:extensions}.

The language $\simplanguage$ is closed under unfolding and
$\unfold_{\simplanguage}$ preserves equivalence.
We then refine $\after$ for $\simplanguage$ as $\after_{\simplanguage}^U(S)=\unfold_{\simplanguage}(\after^U(S))$.
\begin{example}\label{ex:marriage-after}
Consider again a database containing information about books, where
the binary predicate $b$ contains the ISBN (first argument) and the title (second argument). We expect for this database
updates of the form $U = \{b(X,Y)\Leftarrow b^U(X,Y)\}$, where $b^U$
is a query defined by the predicate definition $b^U(X,Y) \leftarrow
b(X,Y)$ $\lor (X =
\param i \land Y = \param t)$, i.e., $U$ is the addition of the tuple $\langle\param i,\param t\rangle$ to $b$. The following integrity
constraint is given:
$$\phi =\; \leftarrow b(X,Y) \land b(X,Z) \land Y \neq Z$$
meaning that no ISBN can be associated with two different titles.
First, each occurrence of $b$
is replaced by $b^U$, obtaining
$$\leftarrow b^U(X,Y) \land b^U(X,Z) \land Y \neq Z.$$
Then $\unfold_{\simplanguage}$ is applied to this integrity
constraint. The first step of definition \ref{def:unfolding}
generates the following:
$$\{\leftarrow (b(X,Y) \lor (X = \param i \land Y = \param t)) \land
(b(X,Z) \lor (X = \param i \land Z = \param t)) \land Y \neq Z\}.$$
The second step translates it to clausal form:
$$\begin{array}{rll}
\after_{\simplanguage}^U(\{\phi\}) = \{
    & \leftarrow b(X,Y) \land b(X,Z) \land  Y \neq Z,\\
    & \leftarrow b(X,Y) \land X = \param i  \land Z = \param t \land Y \neq Z,\\
    & \leftarrow X = \param i \land Y = \param t \land b(X, Z) \land Y \neq
    Z,\\
    & \leftarrow X = \param i \land Y = \param t \land X = \param i \land Z = \param t \land Y \neq Z &
    \}.
\end{array}$$
\end{example}

\subsection{Simplification in
$\simplanguage$}\label{sec:simp-in-simplanguage}

The result returned by $\after_{\simplanguage}$ may contain
redundant parts (e.g., $a= a$) and does not exploit the consistency assumption.
For this purpose, we define a transformation
$\optimize_{\simplanguage}$ that optimizes a given constraint theory
using a set of trusted hypotheses.
We describe here an implementation in terms of sound and
terminating rewrite rules.
Among the tools we use are subsumption, reduction, and resolution.
\begin{definition}[Subsumption]\label{def:subsumption}
Given two denials $D_1$ and $D_2$, $D_1$
\emph{subsumes}\index{subsumption} $D_2$ (via $\sigma$) iff there is a substitution $\sigma$ such that each
literal in $D_1\sigma$ occurs in $D_2$. The subsumption is
\emph{strict}\index{subsumption!strict subsumption} if $D_1$ is not
a variant of $D_2$.
\end{definition}
The subsumption algorithm (see, e.g., \cite{GM92}), besides checking
subsumption, also returns the substitution $\sigma$.
As is well known, the subsuming denial implies the subsumed one.
\begin{example}
	  $\leftarrow p(X,Y)\land q(Y)$
	  subsumes $\leftarrow p(X,b)\land X \neq a \land q(b)$ via $\{Y/b\}$.
\end{example}

\emph{Reduction}~\cite{GM90}
characterizes the elimination of redundancies within a single
denial.
\begin{definition}[Reduction]\label{def:reduction}
For a denial $\phi$, the reduction $\phi^-$ of $\phi$ is the result
of applying on $\phi$ the following rules as long as possible, where
$L$ is a literal,
$c_1,c_2$ are distinct constants, $X$ a bound
variable, $t$ a term, $A$ an atom, $C$, $D$ (possibly empty)
conjunctions of literals and $\vars$\index{1vars@$\vars$} indicates
the set of bound variables occurring in its argument.
\index{$\rewritenospace$ (rewrite)}
$$
\begin{array}{rcll}
\leftarrow L \land C&
\rewrite &
\leftarrow C\mbox{ \;\;if } L \mbox{ is of the form } t=t \mbox{ or } c_1\neq c_2 \\
\leftarrow L \land C&
\rewrite &
\true\mbox{ \;\;if } L \mbox{ is of the form } t\neq t \mbox{ or } c_1= c_2 \\
\leftarrow X = t \land C&
\rewrite &
\leftarrow C\{X/t\}\\
\leftarrow A \land \lnot A \land C&
\rewrite &
\true\\
\leftarrow C \land D&
\rewrite &
\leftarrow D
\mbox{ \;\;if $\leftarrow C$ subsumes $\leftarrow D$ with \!a \!substitution $\sigma$ \!s.t. }\\ && \dom(\sigma) \cap \vars(D) = \emptyset\\
\end{array}
$$
\end{definition}
Clearly, $\phi^- \equiv \phi$.
The last rewrite rule is called subsumption factoring~\cite{EO93}
and includes the elimination of duplicate literals from a denial as
a special case. 
An additional rule for handling parameters may be considered in the
reduction process:
\begin{equation}\label{eqn:reduction-param}
\leftarrow \param a = c_1 \land C \;
\rewrite \;
\leftarrow \param a = c_1 \land C\{\param a/c_1\}
\end{equation}
This may replace parameters with constants and possibly allow
further reduction. For example, the denial $\leftarrow \param a = c
\land b = \param a$ would be transformed into $\leftarrow
\param a = c \land b = c$, and, thus, into $\true$, since $b$ and
$c$ are different constants.

We now briefly recall the definition of resolvent and derivation for
the well-known principle of \emph{resolution}~\cite{Robinson65,ChangLee73} and cast it to the context of deductive databases.
\begin{definition}\label{def:binary-resolvent}
Let $\phi_1'=\;\leftarrow L_1\land\dots\land
L_m,\phi_2'=\;\leftarrow M_1\land\dots\land M_n$ be two standardized
apart variants of denials $\phi_1,\phi_2$. If $\theta$ is a mgu of
$\{L_i,\lnot M_j\}$ then the clause
$$\leftarrow (L_1\land\dots\land L_{i-1}\land L_{i+1}\land\dots L_m \land M_1\land\dots\land M_{j-1}\land M_{j+1}\land\dots \land M_n)\theta$$
is a \emph{binary resolvent} of $\phi_1$ and $\phi_2$ and
$L_i,M_j$ are said to be the literals \emph{resolved upon}.
\end{definition}
The resolution principle is a sound inference rule, in that a
resolvent is a logical consequence of its parent clauses, and preserves range restriction~\cite{DBLP:journals/tcs/Topor87}.

We also refer to the notion of \textit{expansion}\index{expansion}
\cite{CGM90}: the expansion of a clause consists in replacing every
constant or parameter in a database predicate (or variable already
appearing elsewhere in database predicates) by a new variable and
adding the equality between the new variable and the replaced item.
We indicate the expansion of a (set of) denial(s) with a ``+''
superscript.
\begin{example}
Let $\phi=\;\leftarrow p(X,a,X)$. Then $\phi^+=\;\leftarrow p(X,Y,Z)
\land Y = a \land Z = X$.
\end{example}
For a constraint theory $\Gamma$ in
$\simplanguage$ and a denial $\phi$ in $\simplanguage$, 
the notation
$\Gamma \provfbr \phi$ indicates that there is a resolution derivation of a denial
$\psi$ from $\Gamma^+$
and $\psi^-$ subsumes $\phi$.
$\provfbr$ is sound and terminates on any input provided that in each resolution step the
resolvent has at most as many literals as those in the largest denial in $\Gamma^+$.

We can now provide a possible implementation of an optimization operator that eliminates redundant literals and denials from a
given constraint theory $\Gamma$ assuming that another theory
$\Delta$ holds.
Informally, $\provfbr$, subsumption and
reduction are used to approximate entailment.
In the following, $A\sqcup B$ indicates union of
disjoint sets.
\begin{definition}\label{def:optimization}
Given two constraint theories $\Delta$ and $\Gamma$ in
$\simplanguage$,
$\optimize_{\simplanguage}^\Delta(\Gamma)$\index{1optimizesimplanguage@$\optimize_{\simplanguage}$}
is the result of applying the following rewrite rules on $\Gamma$ as
long as possible. In the following, $\phi$, $\psi$ are denials in
$\simplanguage$, $\Gamma'$ is a constraint theory in
$\simplanguage$.
$$
\!\!\!\!\begin{array}{rcl} \{\phi\} \sqcup \Gamma' &
\rewrite &
\Gamma' \mbox{ if } \phi^- = \true \\
\{\phi\} \sqcup \Gamma' &
\rewrite &
\Gamma' \mbox{ if } (\Gamma' \cup \Delta) \provfbr \phi \\
\{\phi\} \sqcup \Gamma' &
\rewrite &
\{\phi^-\} \cup \Gamma' \mbox{ if } \phi \neq \phi^- \neq \true\\
\{ \phi \} \sqcup \Gamma' &
\rewrite &
\{ \psi^-\} \cup \Gamma' \mbox{ if } (\{\phi\} \sqcup \Gamma' \cup
\Delta) \provfbr \psi
\mbox { and } \psi^- \mbox { strictly subsumes } \phi
\end{array}
$$
\end{definition}
The first two rules attempt the elimination of a whole denial,
whereas the last two try to remove literals from a
denial.
We can now fully define simplification for $\simplanguage$:
\begin{definition}\label{def:simp}
Let $S=\langle\idb,\Gamma\rangle\in\simplanguage$ and $U$ be an
update in $\simplanguage$ with respect to $S$. Let
$\unfold_{\simplanguage}(S)=\langle\emptyset,\Gamma'\rangle$. We
define
$\simp_{\simplanguage}^U(S)=\optimize_{\simplanguage}^{\Gamma'}(\after_{\simplanguage}^U(S)).$
\index{1simpsimplanguage@$\simp_{\simplanguage}$}
\end{definition}
Clearly, $\simp_{\simplanguage}^U(S)$ is a CWP of $S$ with respect to $U$.
\begin{example}\label{ex:marriage-simp}
Consider again Example
\ref{ex:marriage-after}.
The
reduction of each denial in $\after_{\simplanguage}^U(\{\phi\})$
generates the following set.
$$\begin{array}{rllll}
\{
    & \leftarrow & b(X,Y) \land b(X,Z) \land  Y \neq Z,\\
    & \leftarrow & b(\param i,Y) \land Y \neq \param t,\\
    & \leftarrow & b(\param i,Z) \land \param t \neq Z &
    \}.
\end{array}$$
Then, the third denial is removed, as it is subsumed by the second
one;
the first constraint is subsumed by $\phi$ and, thus, removed, so
$\simp_{\simplanguage}^U(\{\phi\}) = \; \{ \leftarrow b(\param i,Y)
\land Y \neq \param t\}.$
This result indicates that, for the database to be consistent after
update $U$, book with ISBN $\param i$ must not be already associated with a title
$Y$ different from $\param t$.
\end{example}

\section{Denials with negated existential quantifiers}\label{cha:extensions}

In the definition of $\simplanguage$ we limited the level of
interaction between negation and existential quantification in the
constraint theories. We now relax
this limitation and extend the syntax of denials so as to allow the
presence of negated existential quantifiers. The simplification
procedure can be adapted to such cases, provided that its components
are adjusted as to handle conjuncts starting with a negated
existential quantifier.

In this section we address the problem of simplification of
integrity constraints in \emph{hierarchical} databases, in which all clauses are range
restricted and the schemata are non-recursive, but there is no other restriction on the
occurrence of negation. We refer to the language of such schemata as
$\simpextlanguage$\index{$\simpextlanguage$}.
\begin{definition}[$\simpextlanguage$]\label{def:simp-ext-language}
Let $S$ be a schema. $S$ is in $\simpextlanguage$ if its starred
dependency graph is acyclic.
\end{definition}
Unfolding predicates in integrity constraints with respect to their
definitions cannot be done in the same way as
$\unfold_{\simplanguage}$. For this purpose, we extend the syntax of
denials so as to allow negated existential quantifiers to occur in
literals.
\begin{definition}[Extended denials]\label{def:ext-denial}
\;\;\;\;A \emph{negated \;\;existential
\;\;expression}\index{negated existential expression} or
\emph{NEE}\index{2nee@NEE (negated existential expression)} is an
expression of the form $\lnot \exists \vec X B$, where $B$ is called
the \emph{body} of the NEE, $\vec X$ are some (possibly all) of the
variables occurring in $B$ and $B$ has the form $L_1 \land
\dots\land L_n$, where each $L_i$ is a \emph{general literal}, i.e., either
a literal or a NEE.

\noindent A formula of the form $\forall \vec X (\leftarrow B)$,
where $B$ is the body of a NEE and $\vec X$ are some (possibly all)
of the free variables in $B$, is called an \emph{extended
denial}\index{extended denial}. When there is no ambiguity on the
variables in $\vec X$, extended denials are simply written
$\leftarrow B$.
\end{definition}
\begin{example}\label{ex:extended-denial}
The formula $\leftarrow parent(X) \land \lnot \exists Y
child\_of(X,Y)$ is an extended denial. It reads as follows: there is
inconsistency if there is a parent $X$ that does not have a child.
Note that this is different from the (non-range restricted) denial
$\leftarrow parent(X) \land \lnot child\_of(X,Y)$, which states that
if $X$ is a parent then all individuals must be his/her children.
\end{example}
We observe that variables under a negated existential quantifier
conform with the intuition behind safeness, so we could conclude
that the first formula in example \ref{ex:extended-denial} is safe,
whereas the second one is not.
We can now apply unfolding in $\simpextlanguage$ to obtain extended
denials. In doing so, attention needs to be paid when replacing
negated intensional predicates by their definition, since they may
contain non-distinguished variables and thus existential quantifiers
have to be explicitly indicated. As was the case for
$\unfold_{\simplanguage}$, the replacements may result in
disjunctions and negated conjunctions. Therefore, additional steps
are needed to restore the extended denial form.
\begin{definition}\label{def:unfolding-ext}
Let $S=\langle \idb, \Gamma\rangle$ be a database schema in
$\simpextlanguage$. We define
$\unfold_{\simpextlanguage}(S)$\index{1unfoldsimpextlanguage@$\unfold_{\simpextlanguage}$}
as the set of extended denials obtained by iterating the two
following steps as long as possible:
\begin{enumerate}
\item replace, in $\Gamma$, each occurrence of a literal of the form
$\lnot p(\vec t)$ by $\lnot \exists \vec Y F^p \{\vec X / \vec t\}$
and of a literal of the form $p(\vec t)$ by $F^p \{\vec X / \vec
t\}$, where $F^p$ is $p$'s defining formula,
$\vec X$ its head variables and $\vec Y$ its
non-distinguished variables. If no replacement was made, stop;
\item transform the resulting formula into a set of extended denials
according to the following patterns; $\Phi(Arg)$ is an expression
indicating the body of a NEE in which $Arg$ occurs; $\vec X$ and
$\vec Y$ are disjoint sequences of variables:
\begin{itemize}
\item  $\leftarrow A\land(B\lor C)$ is replaced by
$\leftarrow A\land B$ and $\leftarrow A\land C$;
\item  $\leftarrow A\land\lnot(B\lor C)$ is replaced by
$\leftarrow A\land\lnot B\land \lnot C$;
\item  $\leftarrow A\land\lnot(B\land C)$ is replaced by
$\leftarrow A\land \lnot B$ and $\leftarrow A\land \lnot C$;
\item $\leftarrow A \land \lnot \exists \vec X\Phi(\lnot \exists
\vec Y[B\land(C \lor D)])$ is replaced by\\
$\leftarrow A \land \lnot \exists \vec X\Phi(\lnot \exists \vec
Y[B\land C]\land \lnot \exists \vec Y[B\land D])$.
\end{itemize}
\end{enumerate}
\end{definition}
Without loss of generality, we can assume that, for any NEE $N=\lnot
\exists \vec X B$ occurring in an extended denial $\phi$, the
variables $\vec X$ do not occur outside $N$ in $\phi$. This can
simply be obtained by renaming the variables appropriately and we
refer to such an extended denial as
\emph{standardized}\index{extended denial!standardized extended
denial}.
The \emph{level}\index{negated existential
expression!level}\index{level} of a NEE in an extended denial is the
number of NEEs that contain it, plus $1$. The level of a variable
$X$ in a standardized extended denial is the level of the NEE
starting with $\lnot \exists \vec X$, where $X$ is one of the
variables in $\vec X$, or $0$ if there is no such NEE.
The level of an extended denial\index{extended denial!level} is the
maximum level of its NEEs, or $0$ if there is no NEE.
In example \ref{ex:extended-denial}, the extended denial has level
$1$, $X$ has level $0$ and $Y$ has level $1$.

With a slight abuse of notation, we write in the following
$S\equiv\Psi$ (or $\Psi\equiv S$), where $S=\langle\idb,\ic\rangle$
is a schema and $\Psi$ is a set of extended denials, to indicate
that, for every database $D$ based on $\idb$, $D\models \ic$ iff
$D\models \Psi$.
We can now claim the correctness of $\unfold_{\simpextlanguage}$. We
state the following proposition without a proof, since all steps in
definition \ref{def:unfolding-ext} are trivially
equivalence-preserving.
\begin{proposition}\label{pro:unfolding-ext}
Let $S\in\simpextlanguage$. Then
$\unfold_{\simpextlanguage}(S)\equiv S$.
\end{proposition}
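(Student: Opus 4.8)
The plan is to prove the statement by a well-founded induction on the unfolding loop of Definition~\ref{def:unfolding-ext}: I would show (i)~that a single pass of steps~1 and~2 carries the current set of extended denials to a logically equivalent one, and (ii)~that the loop terminates and its fixed point contains no intensional predicate, so that it is a set of extended denials in the sense of the proposition. The only place the database semantics enters is the observation that, since $S\in\simpextlanguage$ is non-recursive (acyclic starred dependency graph) and stratified, its standard model coincides with the model of the completion of its rules; hence for every intensional $p$ with defining formula $F^p$, head variables $\vec X$ and non-distinguished variables $\vec Y$, we have $D\models\forall\vec X(p(\vec X)\leftrightarrow\exists\vec Y\,F^p)$. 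Everything past this point is pure first-order manipulation, and I would isolate once and for all the fact that logical equivalence is a congruence for the grammar of extended denials, so that replacing a subformula by an equivalent one inside any context (under negations, under $\exists$, inside a NEE at arbitrary level) preserves equivalence.

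With that in hand, step~1 is handled as follows. For a \emph{positive} occurrence $p(\vec t)$, replacing it by $F^p\{\vec X/\vec t\}$ is sound because the freshly introduced non-distinguished variables occur only in the replaced subformula, so they may be absorbed into the (universally quantified) variables of the enclosing denial body, equivalently read as existentially quantified there; more precisely $\lnot\exists\vec X\,B[p(\vec t)]\equiv\lnot\exists\vec X\exists\vec Y\,B[F^p\{\vec X/\vec t\}]$ by the completion equivalence and the distribution of $\exists$ over a positive context, and the same holds verbatim when $p(\vec t)$ sits inside a NEE. For a \emph{negated} occurrence, this absorption is unavailable, which is exactly why the definition writes the explicit $\lnot\exists\vec Y\,F^p\{\vec X/\vec t\}$: by completion, $\lnot p(\vec t)\equiv\lnot\exists\vec Y\,F^p\{\vec X/\vec t\}$, and the congruence lemma lifts this replacement through whatever context surrounds it.

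Step~2 is then a matter of recognising each of the four rewrite patterns as a tautology applied inside a context: the first three are distributivity of $\land$ over $\lor$ and De~Morgan (remembering that a set of denials stands for the conjunction of the universal closures of its members, so splitting $\leftarrow A\land(B\lor C)$ into $\leftarrow A\land B$ and $\leftarrow A\land C$ is simply $\lnot(A\land(B\lor C))\equiv\lnot(A\land B)\land\lnot(A\land C)$), and the fourth uses distribution of $\exists$ over $\lor$, so $\lnot\exists\vec Y[B\land(C\lor D)]\equiv\lnot\exists\vec Y[B\land C]\land\lnot\exists\vec Y[B\land D]$, again pushed under the enclosing NEE by congruence. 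The preliminary standardization of extended denials is innocuous variable renaming. For termination I would exhibit a well-founded measure that strictly decreases at each pass: fixing a topological order on predicates (available by acyclicity), take the multiset of intensional-predicate occurrences in $\Gamma$ ordered by that rank; step~1 deletes one occurrence of some $p$ and inserts only occurrences of predicates strictly below $p$, so the multiset drops, while step~2 introduces no intensional predicate and its own termination between consecutive applications of step~1 is controlled by a secondary measure counting the $\lor$'s and negated $\land$'s awaiting normalization. When the loop stops no replacement is possible, i.e.\ no intensional predicate remains; chaining the per-pass equivalences gives $\unfold_{\simpextlanguage}(S)\equiv S$.

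The real work — and the only part I would expect to require care rather than routine checking — is the quantifier-scope bookkeeping that underpins the congruence lemma: pinning down that the non-distinguished variables created by step~1 are genuinely fresh and remain confined to their NEE, that a positive occurrence really does license the universal-to-existential reading at the level of the denial body even when it is several NEEs deep, and that ``substitution of equivalents in a context'' is legitimate when the context threads through negations and existential quantifiers. Stating that lemma precisely once, over the syntax of Definition~\ref{def:ext-denial}, is what makes all four rewrite patterns and both clauses of step~1 fall into place; the individual equivalences themselves are, as the paper remarks, trivial.
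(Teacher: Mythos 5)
Your proposal is correct and takes essentially the same route as the paper, which states Proposition~\ref{pro:unfolding-ext} without proof on the grounds that every step of Definition~\ref{def:unfolding-ext} is trivially equivalence-preserving: your argument is exactly that observation made explicit (per-step equivalence via the completion reading of each predicate's defining formula plus a substitution-of-equivalents congruence, together with a termination measure licensed by acyclicity of the starred dependency graph). The one delicate point you rightly flag—that the fresh non-distinguished variables produced by unfolding a positive occurrence inside a NEE must be read as existentially quantified at that NEE's level, as in your displayed equivalence $\lnot\exists\vec X\,B[p(\vec t)]\equiv\lnot\exists\vec X\exists\vec Y\,B[F^p\{\vec X/\vec t\}]$, rather than as level-$0$ universals—is indeed the only bookkeeping that needs care, and it matches how the paper's own examples apply the operator.
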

Since the variables under a negated existential quantifier conform
with the intuition behind safeness, the unfolding of a schema in
which all the clauses are range restricted yields a set of extended
denials that still are safe in this sense.
We also note that the language of extended denials is very
expressive.
In \cite{LT84}, it was shown that \emph{any} closed formula of the
form $\forall \vec X (\leftarrow B)$, where $B$ is a first-order
formula and $\vec X$ are its open variables, can be equivalently
expressed by a set of Prolog rules plus the denial $\leftarrow
q(\vec X)$, where $q$ is a fresh predicate symbol.
The construction is such that, if $B$ is function-free, then the
resulting rule set is also function-free (no skolemization is
needed), i.e., it is that of a schema in $\simpextlanguage$. The
unfolding of the obtained schema is therefore an equivalent set of
extended denials.

A simplification procedure can now be constructed for extended
denials in a way similar to what was done in $\simplanguage$.
\begin{definition}\label{def:after-ext}
Let $S$ be a schema in $\simpextlanguage$ and $U$ an update. We
define
$\after^U_{\simpextlanguage}(S)$\index{1aftersimpextlanguage@$\after_{\simpextlanguage}$}
as $\unfold_{\simpextlanguage}(\after^U(S))$.
\end{definition}
The optimization step needs to take into account the nesting of NEEs
in extended denials. Besides the elimination of disjunctions within
NEEs, which is performed by $\unfold_{\simpextlanguage}$, we can
also eliminate, from a NEE, equalities and non-equalities referring
to variables of lower level with respect to the NEE.
\begin{definition}\label{def:equality-elimination}
Let $A,B,C$ be (possibly empty) conjunctions of general literals,
$\vec Y, \vec Z$ disjoint (sequences of) variables, $W$ a variable
of level lower than the level of $\vec Z$, and $\Phi(Arg)$ an
expression indicating a NEE in which $Arg$ occurs.
The following rewrite rules are, respectively, the \emph{equality
elimination}\index{equality elimination rule} and \emph{non-equality
elimination}\index{non-equality elimination rule} rules.
$$\begin{array}{ll}
\leftarrow A \land \Phi(\lnot \exists\vec Y[B \land \lnot
\exists\vec Z(C\land W=c)])
\rewrite\\
\leftarrow A \land \Phi(\lnot \exists\vec Y[B\land W=c \land \lnot \exists\vec Z(C)] \land \lnot \exists\vec Y[B \land W\neq c])\\
\end{array}$$
$$\begin{array}{ll}
\leftarrow A \land \Phi(\lnot \exists\vec Y[B \land \lnot
\exists\vec Z(C\land W\neq c)])
\rewrite\\
\leftarrow A \land \Phi(\lnot \exists\vec Y[B\land W\neq c \land \lnot \exists\vec Z(C)] \land \lnot \exists\vec Y[B \land W=c])\\
\end{array}$$
\end{definition}
The above (non-)equality elimination rewrite rules are equivalence
preserving, as stated below.
\begin{proposition}\label{pro:equality-elimination-correct}
Let $\psi$ be an extended denial and $\psi'$ (resp. $\psi''$) be the
extended denial obtained after an application of the equality (resp.
non-equality) elimination rule. Then $\psi\equiv\psi'$ and
$\psi\equiv\psi''$.
\end{proposition}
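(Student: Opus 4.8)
The plan is to reduce both rewrite rules to a single elementary logical
equivalence, namely moving a (non\nobreakdash-)equality literal across a negated
existential quantifier whose bound variables do not include the variable of the
literal. For the equality elimination rule, write $N_1 = \lnot\exists\vec
Y[B\land\lnot\exists\vec Z(C\land W=c)]$ for the NEE that is rewritten, and $N_2
= \lnot\exists\vec Y[B\land W=c\land\lnot\exists\vec Z(C)]$, $N_3 =
\lnot\exists\vec Y[B\land W\neq c]$ for the two NEEs that replace it. It
suffices to show $N_1 \equiv N_2 \land N_3$: since $\Phi(\cdot)$ is built only
from conjunctions of general literals and (nested) negated existential
quantifiers, and since $N_1$ and $N_2\land N_3$ have exactly the same free
variables (namely $W$ together with the free variables of $B$ and $C$ other than
those in $\vec Y$ and $\vec Z$), replacing $N_1$ by $N_2\land N_3$ inside
$\Phi$, and hence inside $\leftarrow A\land\Phi(\cdot)$, preserves logical
equivalence by the standard substitutivity of equivalents in first-order logic;
this is an equivalence valid in all structures, so in particular $\psi\equiv\psi'$
for every database based on the given $\idb$.

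For the core equivalence I would argue as follows. The level hypothesis --- $W$
has level strictly lower than that of $\vec Z$ --- guarantees that $W$ does not
occur among $\vec Z$, so the literal $W=c$ is unaffected by $\exists\vec Z$ and
$\exists\vec Z(C\land W=c)$ holds iff $W=c$ and $\exists\vec Z\,C$; negating
gives $\lnot\exists\vec Z(C\land W=c)\equiv(W=c\land\lnot\exists\vec Z\,C)\lor
W\neq c$. Substituting this equivalent into the body of the outer NEE and
distributing $B$ over the disjunction yields $N_1\equiv\lnot\exists\vec
Y[(B\land W=c\land\lnot\exists\vec Z\,C)\lor(B\land W\neq c)]$, and pushing
$\exists\vec Y$ through the disjunction followed by De Morgan gives precisely
$N_2\land N_3$. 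The non-equality rule is entirely symmetric: the only change is
the elementary identity $\lnot\exists\vec Z(C\land W\neq c)\equiv(W\neq
c\land\lnot\exists\vec Z\,C)\lor W=c$, after which the same distribution and De
Morgan steps produce the claimed replacement, so $\psi\equiv\psi''$.

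The calculations themselves are routine; the only places where the plan could go
wrong are variable-bookkeeping details. First, one must confirm that the level
condition really does exclude $W\in\vec Z$ (so that $W=c$ commutes with
$\exists\vec Z$), which is immediate from the definition of the level of a
variable in a standardized extended denial, since the variables of $\vec Z$ have
exactly the level of that NEE. Second, forming $N_2\land N_3$ reuses the prefix
$\lnot\exists\vec Y$ in both conjuncts, so the output should be re-standardized
by renaming the $\vec Y$ of one conjunct; this is harmless and does not change
meaning. Third, one should check that no free variable of $N_1$ is captured or
released by the replacement, but this is immediate because $N_1$ and $N_2\land
N_3$ share the same free variables, so the substitutivity metatheorem applies
verbatim. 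No appeal to the consistency assumption or to the database semantics
beyond plain first-order reasoning is needed, which is precisely why the
equivalence holds for every database, not only the consistent ones.
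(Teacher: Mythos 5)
Your proposal is correct and follows essentially the same route as the paper's proof: a case split on $W=c$ versus $W\neq c$, moving the (non-)equality across the inner quantifier (justified because the level condition forces $W\notin\vec Z$), a De Morgan split of the outer NEE, and an absorption/resolution-style simplification, with the only difference being that you perform the inner rewrite before the outer split while the paper inserts the tautological conjunct $(W=c\lor W\neq c)$ at the outer level first. Your explicit appeal to substitutivity of equivalents within $\Phi$ and the remark on re-standardizing $\vec Y$ are harmless refinements of what the paper leaves implicit.
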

\begin{proof}
Using the notation of definition \ref{def:equality-elimination}, we
have:
$$\begin{array}{rl} & \lnot \exists\vec Y[B \land \lnot \exists\vec Z(C\land
W=c)]\\
\equiv &
\lnot \exists\vec Y[B \land (W=c \lor W\neq c) \land \lnot
\exists\vec Z(C\land
W=c)]\\
\equiv &
 \lnot \exists\vec Y[B\land W=c \land \lnot \exists\vec
Z(C\land W=c)] \land\\
& \lnot \exists\vec Y[B \land W\neq c\land \lnot \exists\vec
Z(C\land W=c)]\\
\equiv &
 \lnot \exists\vec Y[B\land W=c \land \lnot \exists\vec
Z(C)] \land \lnot \exists\vec Y[B \land W\neq c]\\
\end{array}$$
In the first step, we added the tautological conjunct $(W=c \lor
W\neq c)$. In the second step we used de Morgan's laws in order to
eliminate the disjunction (as in the definition of
$\unfold_{\simpextlanguage}$). The formula $W=c \land \lnot
\exists\vec Z(C\land W=c)$ can be rewritten as $W=c \land \lnot
(\exists\vec Z C\land W=c)$, and then as $W=c \land (\lnot
\exists\vec Z(C) \lor W\neq c)$, which, with a resolution step,
results in the first NEE in the last extended denial. Similarly,
$W\neq c\land \lnot \exists\vec Z(C\land W=c)$ can be rewritten as
$W\neq c\land (\lnot \exists\vec Z(C) \lor W\neq c)$, which
results (by absorption) into the second NEE in the last extended
denial.

The proof is similar for non-equality elimination.
\end{proof}
In case only levels $0$ and $1$ are involved, the rules look
simpler. For example, equality elimination can be conveniently
formulated as follows.
\begin{equation}\label{eqn:level1-equality-elimination}\begin{array}{rcl} \leftarrow B \land \lnot \exists \vec
X[C\land W = c] &
\rewrite
 \begin{array}{rll}\{&\leftarrow B\{W/c\} \land \lnot
 \exists\vec X C\{W/c\}\\
 &\leftarrow B \land W\neq c&\}.\end{array}
\end{array}\end{equation}
Although (non-)equality elimination does not necessarily shorten the
input formula (in fact, it can also lengthen it), it always reduces
the number of literals in higher level NEEs. Therefore, convergence
to termination can still be guaranteed if this rewrite rule is
applied during optimization. Repeated application of such rules
``pushes'' outwards the involved (non-)equalities until they reach
an NEE whose level is the same as the level of the variable in the
(non-)equality. Then, in case of an equality, the usual equality
elimination step of reduction can be applied.
\begin{example}
The following rewrites show the propagation of a variable of level 0
($X$) from level $2$ to level $0$ via two equality eliminations and
one non-equality elimination.

$$
\!\!\!\!\!\!\!\!\!\!\begin{array}{rcllll}
&  & & \leftarrow &p(X) \land \lnot \exists Y\{q(X,Y) \land \lnot \exists Z[r(X,Y,Z) \land X=a]\} \\\\
 & \equiv & & \leftarrow &p(X) \land
   \lnot \exists Y\{q(X,Y) \land X=a \land \lnot \exists Z[r(X,Y,Z)]\} \\ &&& & \land
   \lnot \exists Y[q(X,Y) \land X\neq a]\\\\
& \equiv &
\{ &\leftarrow &p(a) \land \lnot \exists Y[q(a,Y) \land \lnot \exists Z r(a,Y,Z)] \land \lnot \exists Y[q(a,Y) \land a\neq a],\\
&& &\leftarrow &p(X) \land X\neq a \land \lnot \exists Y[X\neq a \land q(X,Y)] &\}\\
\\
& \equiv &
\{ & \leftarrow &p(a) \land \lnot \exists Y[q(a,Y) \land \lnot \exists Z r(a,Y,Z)]\land \lnot \exists Y[q(a,Y) \land a\neq a],\\
&&& \leftarrow &p(X) \land X\neq a \land \lnot \exists Y q(X,Y),\\
&&& \leftarrow &p(a) \land X\neq a \land X= a&\}\\
\\
& \equiv &
\{ & \leftarrow &p(a) \land \lnot \exists Y[q(a,Y) \land \lnot \exists Z r(a,Y,Z)],\\
&&& \leftarrow &p(X) \land X\neq a \land \lnot \exists Y q(X,Y)&\}\\
\end{array}
$$
In the first step we applied equality elimination to $X=a$ at level
$2$.
In the second step we applied the rewrite rule
\eqref{eqn:level1-equality-elimination} for equality elimination to
$X=a$ at level 1.
Then we applied non-equality elimination to $X\neq a$ at level $1$
in the second denial.
In the last step, we removed, by standard application of reduction,
the last extended denial and the last NEE in the first extended
denial, which are clearly tautological.
\end{example}
We observe that the body of a NEE is structurally similar to the
body of an extended denial. The only difference is that, in the
former, there are variables that are quantified at a lower level.
According to this observation, such (free) variables in the body of
a NEE are to be treated as parameters during the different
optimization steps, since, as was indicated on page
\pageref{page:parameter-introduced}, parameters are free variables.

Reduction (Definition~\ref{def:reduction}) can then take place in NEE bodies exactly
as in ordinary denials, with the proviso above of treating free
variables as parameters.

The definition of resolution (Definition~\ref{def:binary-resolvent}) can be
adapted for extended denials by applying it to general literals
instead of literals.

Subsumption (Definition \ref{def:subsumption}) can also be applied to extended denials
without changing the definition. However, we can slightly modify the
notion of subsumption to explore the different levels of NEEs in an
extended denial. This is captured by the following definition.
\begin{definition}\label{def:extended-subsumption}\index{subsumption!extended subsumption}
Let $\phi=\leftarrow A\land B$ and $\psi=\leftarrow C\land D$ be two
extended denials, where $A$ and $C$ are (possibly empty)
conjunctions of literals and $B$ and $D$ are (possibly empty)
conjunctions of NEEs. Then $\phi$ extended-subsumes $\psi$ if both
conditions (1) and (2) below hold.
\begin{enumerate}
\item[(1)] $\leftarrow A$ subsumes $\leftarrow C$ with substitution
$\sigma$.
\end{enumerate}
\begin{enumerate}
\item[(2)] For every NEE $\lnot\exists \vec X_N N$ in $B$, there is
a NEE $\lnot\exists \vec X_M M$ in $D$ such that $\leftarrow
N\sigma$
 is extended-subsumed by $\leftarrow M$.
\end{enumerate}
\end{definition}
\begin{example}
The extended denial $\leftarrow p(X)\land \lnot \exists Y,Z
[q(X,Y)\land r(Y,Z)]$ extended-subsumes the extended denial
$\leftarrow p(a)\land \lnot \exists T [q(a,T)] \land \lnot \exists W
[s(T)]$, since $\leftarrow p(X)$ subsumes $\leftarrow p(a)$ with
substitution $\{X/a\}$ and, in turn, $\leftarrow q(a,T)$ subsumes
$\leftarrow q(a,Y)\land r(Y,Z)$.
\end{example}
This definition encompasses ordinary subsumption, in that it
coincides with it if $B$ and $D$ are empty. Furthermore, it captures
the desired property that if $\phi$ extended-subsumes $\psi$, then
$\phi\models \psi$; the reverse, as in subsumption, does not
necessarily hold.
\begin{proposition}
Let $\phi$ and $\psi$ be extended denials. If $\phi$
extended-subsumes $\psi$ then $\phi\models\psi$.
\end{proposition}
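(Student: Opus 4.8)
The plan is to proceed by induction on the structure of the extended denials, mirroring the recursive shape of Definition~\ref{def:extended-subsumption}. Write $\phi=\leftarrow A\land B$ and $\psi=\leftarrow C\land D$ as in that definition, where $A,C$ are conjunctions of ordinary literals and $B,D$ are conjunctions of NEEs, and let $\sigma$ be the substitution witnessing condition~(1), i.e.\ $\leftarrow A$ subsumes $\leftarrow C$ via $\sigma$, so that every literal of $A\sigma$ occurs in $C$. The induction parameter is the nesting level (Definition of \emph{level}) of $\phi$: the base case is level~$0$, where $B$ is empty and $\phi$ is an ordinary denial; there the claim is exactly the well-known fact, already recalled in the excerpt after Definition~\ref{def:subsumption}, that the subsuming denial implies the subsumed one, so $\phi\models\psi$.

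For the inductive step I would argue semantically. Fix a database $D$ and suppose, for contradiction, that $D\models\phi$ but $D\not\models\psi$; then there is a ground instantiation $\theta$ of the free (outermost universally quantified) variables of $\psi$ under which the body $(C\land D)\theta$ holds in $D$'s standard model. Since all literals of $A\sigma$ appear in $C$, the composed substitution $\sigma\theta$ makes $A$ true in $D$. It remains to produce, for the NEEs in $B$, an instantiation of $\phi$'s remaining free variables that satisfies $B$ as well, contradicting $D\models\phi$. Take any NEE $\lnot\exists\vec X_N\,N$ in $B$; by condition~(2) there is a NEE $\lnot\exists\vec X_M\,M$ in $D$ with $\leftarrow N\sigma$ extended-subsumed by $\leftarrow M$. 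Because $\psi$ is standardized, the variables $\vec X_M$ do not occur outside their NEE, so $\lnot\exists\vec X_M\,M$ evaluated under $\theta$ is $\lnot\exists\vec X_M\,(M\theta)$; its truth in $D$ means no instantiation of $\vec X_M$ satisfies $M\theta$. Applying the induction hypothesis to the pair $(\leftarrow M,\ \leftarrow N\sigma)$ — both of strictly smaller level — gives $\leftarrow M\models\leftarrow N\sigma$, equivalently that whenever the body of $N\sigma$ is satisfiable so is the body of $M$; contrapositively, unsatisfiability of $M\theta$ yields unsatisfiability of $N\sigma\theta$ (after also pushing $\theta$ through $\sigma$, using that $\mathrm{dom}(\sigma)$ consists of $\phi$'s variables). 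Hence $\lnot\exists\vec X_N\,(N\sigma\theta)$ holds, i.e.\ the $B$-conjunct is satisfied under $\sigma\theta$; doing this for every NEE in $B$ exhibits an instantiation of $\phi$ violating $D\models\phi$, the desired contradiction.

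The main obstacle is bookkeeping with the substitutions and quantifier scopes: one must check that $\sigma$ (acting only on $\phi$'s variables) and $\theta$ (acting only on $\psi$'s free variables) have disjoint domains, that composing them does not capture any existentially bound $\vec X_N$ or $\vec X_M$ — which is exactly what standardization guarantees — and that ``extended-subsumed'' at the inner level really does transfer through $\sigma$ so that the induction hypothesis applies to $\leftarrow N\sigma$ rather than $\leftarrow N$. Once these scope conditions are made explicit, the semantic argument above is routine, and the level-based induction terminates because every inner NEE has strictly smaller level than the extended denial containing it.
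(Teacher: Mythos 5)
Your proof is correct and follows essentially the same route as the paper: induction on nesting level with ordinary subsumption as the base case, and condition (2) plus the induction hypothesis applied to the pairs $(\leftarrow M,\ \leftarrow N\sigma)$ to handle the NEEs; you merely spell out semantically (via a satisfying instance $\theta$ of $\psi$'s body) the entailment-combination step that the paper states more telegraphically. The only imprecision, which the paper's own proof shares, is the induction measure: the level of $\leftarrow M$ is bounded by the level of $\psi$, not of $\phi$, so strictly one should induct on, e.g., the sum (or maximum) of the two levels — a cosmetic fix.
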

\begin{proof}
Let $\phi,\psi,A,B,C,D$ be as in Definition
\ref{def:extended-subsumption}. If $B$ and $D$ are empty the claim
holds, since $\phi$ and $\psi$ are ordinary denials.
The claim also holds if $D$ is not empty, since $\leftarrow C$
entails $\leftarrow C\land D$.
We now show the general claim with an inductive proof on the level
of extended denials.

\noindent The base case (level $0$) is already proven.

\noindent Inductive step. Suppose now that $\phi$ is of level $n+1$
and that the claim holds for extended denials of level $n$ or less.
Assume as a first case that $B$ is empty. Then $\phi$ entails
$\psi$, since $\phi$ entails $\leftarrow C$ ($\leftarrow A$ subsumes
$\leftarrow C$ by hypothesis).
Assume for the moment that $B=\lnot\exists \vec X_N N$ is a NEE of
level $1$ in $\phi$ and that $D$ contains a NEE (of level $1$)
$\lnot\exists \vec X_M M$, such that $\phi'=\leftarrow N\sigma$%
is extended-subsumed by $\psi'=\leftarrow M$, as assumed in the hypotheses.
But $\phi'$ and $\psi'$
are extended denials of level $n$ and, therefore, if $\psi'$
subsumes $\phi'$ then $\psi'$ entails $\phi'$ by inductive
hypothesis.
Clearly, since $\leftarrow A$ entails $\leftarrow C\land D$ (by
hypothesis) and $\leftarrow M$%
 entails $\leftarrow N\sigma$%
(as a consequence of the inductive hypothesis), then $\leftarrow
A\land B$ entails $\leftarrow C\land D$, which is our claim. If $B$
contains more than one NEE of level $1$, the argument is iterated by
adding one NEE at a time.
\end{proof}
The inductive proof also shows how to check extended subsumption
with a finite number of subsumption tests. This implies that
extended subsumption is decidable, since subsumption is.
Now that a correct extended subsumption is introduced, it can be
used instead of subsumption in the subsumption factoring rule of
reduction (Definition~\ref{def:reduction}).
In the following, when referring to a NEE $N=\lnot \exists \vec X B$
we can also write it as a denial $\leftarrow B$, with the
understanding that the free variables in $N$ are considered
parameters.
\begin{definition}\label{def:reduction-ext}\index{reduction!for extended denials}
For an extended denial $\phi$, the reduction $\phi^-$ of $\phi$ is
the result of applying on $\phi$ equality and non-equality
elimination as long as possible, and then the rules of Definition~\ref{def:reduction} (reduction) on
$\phi$ and its NEEs as long as possible, where ``literal'' is
replaced by ``general literal'', ``subsumes'' by
``extended-subsumes'' and ``denial'' by ``extended denial''.
\end{definition}
Without reintroducing similar definitions, we assume that the same
word replacements are made for the notion of $\provfbr$. The underlying
notions of substitution and unification also apply to extended
denials and general literals; only, after substitution with a
constant or parameter, the existential quantifier of a variable is
removed. For example, the extended denials $\phi=\leftarrow
p(X,b)\land \lnot\exists Z[q(Z,X)]$ and $\psi=\leftarrow p(a,Y)\land
\lnot q(c,a)$ unify with substitution $\{X/a, Y/b, Z/c\}$.
By virtue of the similarity between denial bodies and NEE bodies, we
extend the notion of optimization as follows.
\begin{definition}\label{def:optimization-ext}
Given two sets of extended denials $\Delta$ and $\Gamma$,
$\optimize_{\simpextlanguage}^\Delta(\Gamma)$\index{1optimizesimpextlanguage@$\optimize_{\simpextlanguage}$}
is the result of applying the following rewrite rules and the rules
of Definition \ref{def:optimization} ($\optimize_{\simplanguage}$) on $\Gamma$ as long as possible. In
the following, $\phi$ and $\psi$ are NEEs, $\Gamma'$ is a set of
extended denials, and $\Phi(Arg)$ is an expression indicating the
body of an extended denial in which $Arg$ occurs.
$$
\!\!\!\!\begin{array}{rcl} \{\leftarrow \Phi(\phi)\} \sqcup \Gamma' &
\rewrite &
\{\leftarrow \Phi(\true) \} \cup \Gamma' \mbox{ if } \phi^- = \true \\
\{\leftarrow \Phi(\phi)\} \sqcup \Gamma' &
\rewrite &
\{\leftarrow \Phi(\true) \} \cup \Gamma' \mbox{ if } (\Gamma' \cup \Delta) \provfbr \phi \\
\{\leftarrow \Phi(\phi)\} \sqcup \Gamma' &
\rewrite &
\{\leftarrow \Phi(\phi^-)\} \cup \Gamma' \mbox{ if } \phi \neq \phi^- \neq \true\\
\{\leftarrow \Phi(\phi) \} \sqcup \Gamma' &
\rewrite &
\{\leftarrow \Phi(\phi^-)\} \cup \Gamma' \mbox{ if } ((\{\phi\} \cup
\{\leftarrow\Phi(\phi)\})\sqcup \Gamma' \cup \Delta) \provfbr \psi\\
&&\mbox { and } \psi^- \mbox { strictly extended-subsumes } \phi
\end{array}
$$
\end{definition}
Finally, the simplification procedure for $\simpextlanguage$ is
composed in terms of $\after_{\simpextlanguage}$ and
$\optimize_{\simpextlanguage}$.
\begin{definition}\label{def:simp-ext}
Consider a schema $S=\langle\idb,\Gamma\rangle\in\simpextlanguage$
and an update $U$. Let $\Gamma'=$ $\unfold_{\simpextlanguage}(S)$.
We define
$$\simp_{\simpextlanguage}^U(S)=\optimize_{\simpextlanguage}^{\Gamma'}(\after_{\simpextlanguage}^U(S)).$$
\index{1simpsimpextlanguage@$\simp_{\simpextlanguage}$}
\end{definition}
Similarly to $\simplanguage$, soundness of the optimization steps
and the fact that $\after$ returns a WP entail the following.
\begin{proposition}
Let $S\in\simpextlanguage$ and $U$ be an update. Then
$\simp_{\simpextlanguage}^U(S)$ is a CWP of $S$ with respect to $U$.
\end{proposition}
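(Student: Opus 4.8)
The plan is to split the proposition into the two facts that mirror the $\simplanguage$ case: writing $S=\langle\idb,\Gamma\rangle$ and $\Gamma'=\unfold_{\simpextlanguage}(S)$, (i) $\after^U_{\simpextlanguage}(S)$ is a weakest precondition of $S$ with respect to $U$, and (ii) $\optimize_{\simpextlanguage}^{\Gamma'}$ rewrites a set of extended denials into one with the same models among databases that satisfy $\Gamma'$. By Proposition~\ref{pro:unfolding-ext}, $\Gamma'\equiv S$, so for every relevant $D$ the condition $D\models\Gamma'$ coincides with $D\models\Gamma$; hence restricting to databases satisfying $\Gamma'$ is exactly the restriction to consistent databases required by Definition~\ref{def:conditiona-weakest-precondition}. (Passing through $\Gamma'$ rather than $\Gamma$ itself as the trusted theory is what makes the hypotheses usable by $\provfbr$, since $\Gamma'$ is already a set of extended denials over extensional predicates.) Granting (i) and (ii), for every consistent $D$ we obtain $D\models\simp_{\simpextlanguage}^U(S)\iff D\models\after^U_{\simpextlanguage}(S)\iff D^U\models\Gamma$, which is the claim.

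For (i) I would start from the already-noted fact that $\after^U(S)$ is a WP of $S$ with respect to $U$, and then check that its starred dependency graph is again acyclic: the renaming $q\mapsto q^U$ (together with $p\mapsto p^U$) produces an isomorphic copy of the original acyclic graph, and the only new arcs descend from the head of the update query into the original (non-recursive) predicates, so no cycle is created. Thus $\after^U(S)\in\simpextlanguage$ and Proposition~\ref{pro:unfolding-ext} gives $\after^U_{\simpextlanguage}(S)=\unfold_{\simpextlanguage}(\after^U(S))\equiv\after^U(S)$, whence $D\models\after^U_{\simpextlanguage}(S)\iff D^U\models\Gamma$ for the relevant databases; so it is a WP and, a fortiori, a CWP.

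For (ii) I would argue by induction on the length of the rewrite sequence, showing that a single application of any rule of Definition~\ref{def:optimization-ext} (or of the inherited rules of Definition~\ref{def:optimization}), run with trusted theory $\Delta=\Gamma'$, turns the current theory $\Theta$ into a $\Theta'$ such that $D\models\Theta$ iff $D\models\Theta'$ for every $D$ with $D\models\Gamma'$. The per-rule check rests on three facts already available: reduction preserves equivalence — including equality and non-equality elimination, by Proposition~\ref{pro:equality-elimination-correct} and the lifting in Definition~\ref{def:reduction-ext} — $\provfbr$ is sound, and extended subsumption implies implication (the proposition stating that if $\phi$ extended-subsumes $\psi$ then $\phi\models\psi$). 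For a rule that deletes a whole denial, or removes a literal from one, on the strength of a $\provfbr$ derivation, one inclusion is a trivial weakening (dropping a conjunct, or passing to a subsumed denial) while the other uses soundness of $\provfbr$ together with the hypothesis $D\models\Gamma'$; for the rules of Definition~\ref{def:optimization-ext}, which act on a NEE $\phi$ inside a context $\Phi(\cdot)$, one additionally observes that replacing $\phi$ by an equivalent NEE — or by $\true$ when $\phi$ is entailed, hence true for every value of its free, lower-level variables, which are exactly the values ranged over by the enclosing quantifiers — leaves the models of $\leftarrow\Phi(\phi)$ unchanged. Termination, needed for the operator to be well-defined, follows from termination of $\provfbr$ and of reduction (with the caveat from the text that equality/non-equality elimination may lengthen a denial but strictly shrinks its higher-level NEEs, so a lexicographic measure still applies) together with the fact that each $\optimize$ step either drops a denial or replaces a (sub)denial by a strictly shorter or strictly more general one.

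The step I expect to be the main obstacle is the soundness of the last rule of Definition~\ref{def:optimization-ext} (and its $\simplanguage$ counterpart in Definition~\ref{def:optimization}): one must track, through the resolution steps underlying $\provfbr$ and through extended subsumption, which variables are parameters and at which level, and argue that an inference drawn from $\{\phi\}$, the extended denial $\leftarrow\Phi(\phi)$ itself, the remaining denials, and $\Gamma'$ stays correct when $\psi^-$ is spliced back in place of $\phi$ inside $\Phi$; dually, exhibiting a single well-founded measure that simultaneously dominates reduction, (non-)equality elimination, and all four $\optimize$ rules is the delicate point for termination. Everything else is a routine transcription of the $\simplanguage$ argument.
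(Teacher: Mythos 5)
Your decomposition---$\after^U$ yields a WP, $\unfold_{\simpextlanguage}$ preserves equivalence (Proposition~\ref{pro:unfolding-ext}), and each optimization rewrite with trusted theory $\Gamma'\equiv\Gamma$ is sound on consistent databases via soundness of reduction, $\provfbr$ and extended subsumption---is exactly the argument the paper relies on, which it states without detailed proof (``soundness of the optimization steps and the fact that $\after$ returns a WP entail the following''). Your elaboration, including the check that the updated schema remains in $\simpextlanguage$ and the per-rule treatment of NEEs inside a context $\Phi(\cdot)$, is a correct filling-in of that sketch.
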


\subsection{Examples}

We now discuss the most complex non-recursive examples that we found
in the literature for testing the effectiveness of the proposed simplification procedure.
\begin{example}
This example is taken from \cite{LL96}.
Consider a schema $S=\langle \idb,\Gamma\rangle$ with three
extensional predicates $a$, $b$, $c$, two intensional predicates
$p$, $q$, a constraint theory $\Gamma$ and a set of trusted
hypotheses $\Delta$.
$$\begin{array}{rll}
\idb = \{&p(X,Y) \leftarrow a(X,Z) \land b(Z,Y),\\
&q(X,Y)  \leftarrow p(X,Z) \land c(Z,Y)&\}\\
\Gamma = \{& \leftarrow p(X,X) \land \lnot q(1,X)\;\;\}\\
\Delta = \{& \leftarrow a(1,5)\;\;\}
\end{array}$$
This schema $S$ is not in $\simplanguage$ and the unfolding of $S$
is as follows.
$$\unfold_{\simpextlanguage}(S) = \{\leftarrow a(X,Y) \land b(Y,X) \land \lnot \exists W,Z (a(1,W)
\land b(W,Z) \land c(Z,X))\}.$$
We want to verify that the update $U = \{ b(X,Y) \Leftarrow b(X,Y)
\land X\neq 5\}$ (the deletion of all $b$-tuples in which the first
argument is $5$) does not affect consistency.
$\after_{\simpextlanguage}^U(S)$ results in the following extended
denial:
$$\leftarrow a(X,Y) \land b(Y,X) \land Y \neq 5 \land \lnot \exists W,Z (a(1,W)
\land b(W,Z) \land W \neq 5 \land c(Z,X)).$$
As previously described, during the optimization process, the last
conjunct can be processed as a separate denial $\phi = \;\leftarrow
a(1,W) \land b(W,Z) \land W \neq 5 \land c(Z,\param X)$, where
$\param X$ is a free variable that can be treated as a parameter
(and thus indicated in bold). With a resolution step with $\Delta$,
the literal $W \neq 5$ is proved to be redundant and can thus be
removed from $\phi$. The obtained formula is then subsumed by
$\unfold_{\simpextlanguage}(S)$ and therefore
$\simp_{\simpextlanguage}^U(S)=\emptyset$, i.e.,
the update cannot violate the integrity constraint, which is the
same result that was found in \cite{LL96}.
\end{example}

In order to simplify the notation for tuple additions and deletions,
we write $p(\vec\param a)$ as a shorthand for the database update
$p(\vec X) \Leftarrow p(\vec X)\lor \vec X= \vec\param a$ and $\lnot
p(\vec\param a)$ for $p(\vec X) \Leftarrow p(\vec X)\land \vec X\neq
\vec\param a$\label{notation:simplified-update-notation}.

\begin{example}\label{ex:leuschel}
The following schema $S$ is the relevant part of an example
described in \cite{LD98} on page 24.

$$\begin{array}{rrcll}
S=\langle\{&married\_to(X, Y) & \leftarrow &
   parent(X, Z) \land parent(Y, Z) \land\\&&&
   man(X) \land woman(Y),\\
&married\_man(X) & \leftarrow & married\_to(X, Y),\\
&married\_woman(X) & \leftarrow &
married\_to(Y, X),\\
&unmarried(X) & \leftarrow & man(X) \land \lnot married\_man(X),\\
&unmarried(X) & \leftarrow & woman(X) \land \lnot
married\_woman(X)&\},\\
 \{&& \leftarrow & man(X) \land
woman(X),\\
&& \leftarrow & parent(X, Y) \land unmarried(X)&\}\rangle
\end{array}$$
If we reformulate the example using the shorthand notation,
the database is updated with $U = \{man(\param a)\}$, where $\param
a$ is a parameter. The unfolding given by
$\unfold_{\simpextlanguage}(S)$ is as follows, where $m$, $w$, $p$
respectively abbreviate $man$, $woman$, $parent$, which are the only
extensional predicates.
$$\begin{array}{rll}\{

& \leftarrow
m(X)\land  w(X),\\
& \leftarrow p(X, Y)\land m(X)\land \lnot \exists(T,Z) [p(X, Z)\land
p(T, Z)\land
m(X)\land w(T)],\\
& \leftarrow p(X, Y)\land w(X)\land \lnot \exists(T,Z) [p(X, Z)\land
p(T, Z)\land w(X)\land m(T)]&\}
\end{array}
$$
We start the simplification process by applying
$\after_{\simpextlanguage}$ to $S$ wrt $U$.
$$\begin{array}{rlll}\after_{\simpextlanguage}^U(S) \equiv \{&
\leftarrow &(m(X) \lor X = \param a) \land w(X),\\
&\leftarrow &p(X, Y) \land (m(X) \lor X=\param a)\land \\&&\lnot
\exists(T,Z) [p(X,
Z)\land p(T, Z)\land (m(X) \lor X=\param a)\land w(T)],\\
&\leftarrow &p(X, Y)\land w(X)\land \\&&\lnot \exists(T,Z) [p(X,
Z)\land p(T, Z)\land w(X)\land (m(T)\lor T=\param a)]
&\!\!\!\!\!\!\}
\end{array}$$
After eliminating the disjunctions at level $0$,
$\after_{\simpextlanguage}^U(S)$ is as follows:
$$\!\!\!\begin{array}{rlll}\{
\leftarrow &\!\!\!\!\!m(X) \land w(X),\\
\leftarrow &\!\!\!\!\!X = \param a \land w(X),\\
\leftarrow &\!\!\!\!\!p(X, Y) \land m(X)
 \land \lnot \exists(T,Z) [p(X, Z)\land p(T, Z)\land (m(X) \lor
 X=\param a)\land w(T)],\\
\leftarrow &\!\!\!\!\!p(X, Y) \land X=\param a\land \lnot
\exists(T,Z) [p(X, Z)\land p(T, Z)\land (m(X) \lor
X=\param a)\land w(T)],\\
\leftarrow &\!\!\!\!\!p(X, Y)\land w(X)\land \lnot \exists(T,Z)
[p(X, Z)\land p(T, Z)\land w(X)\land (m(T)\lor T=\param a)]
&\!\!\!\!\!\!\}
\end{array}$$
Now we can eliminate the disjunctions at level $1$ and obtain the
following set.
$$\begin{array}{rlll}\{
\leftarrow &m(X) \land w(X),\\
\leftarrow &X = \param a \land w(X),\\
\leftarrow &p(X, Y) \land m(X)
 \land \lnot \exists(T,Z) [p(X, Z)\land p(T, Z)\land m(X)
\land w(T)] \land \\ &\lnot \exists(T,Z) [p(X, Z)\land p(T, Z)\land
 X=\param a \land w(T)],\\
\leftarrow &p(X, Y) \land X=\param a\land\lnot \exists(T,Z) [p(X,
Z)\land
 p(T, Z)\land m(X)
\land w(T)] \land \\ &\lnot \exists(T,Z) [p(X, Z)\land
 p(T, Z)\land
 X=\param a \land w(T)],\\
\leftarrow &p(X, Y)\land w(X)\land \lnot \exists(T,Z) [p(X, Z)\land
p(T,Z)\land w(X)\land m(T)] \land\\
& \lnot \exists(T,Z) [p(X, Z)\land p(T,Z)\land w(X)\land T=\param
a]&\}
\end{array}$$
We can now proceed with the optimization of this set of extended
denials by using the $\optimize_{\simpextlanguage}$ transformation.
Clearly, the first, the third and the fifth extended denial are
extended-subsumed by the first, the second and, respectively, the
third extended denial in $\unfold_{\simpextlanguage}(S)$ and are
thus eliminated. The second denial reduces to $\leftarrow w(\param
a)$. In the fourth denial the equality $X=\param a$ at level $0$ is
eliminated, thus substituting $X$ with $\param a$ in the whole
extended denial.
We obtain the following.
$$\begin{array}{rlll}
\{ \leftarrow &w(\param a),\\
 \leftarrow &p(\param a, Y) \land \lnot \exists(T,Z) [p(\param a, Z)\land
 p(T, Z)\land m(\param a)
\land w(T)] \land \\
&\lnot \exists(T,Z) [p(\param a, Z)\land
 p(T, Z)\land \param a = \param a \land w(T)]&\}\\
\end{array}$$
For the last extended denial, first we can eliminate the trivially
succeeding equality $\param a=\param a$ from the body of the second
NEE. Then we can consider that
$$\leftarrow \lnot \exists(T,Z) [p(\param a, Z)\land
 p(T, Z)\land m(\param a)
\land w(T)]$$
extended-subsumes
$$%
\leftarrow p(\param a, Y) \land \lnot \exists(T,Z) [p(\param a, Z)\land
 p(T, Z)\land w(T)]$$
so we can eliminate by subsumption factoring the subsuming part and
leave the subsumed one.
The simplification procedure for $\simpextlanguage$ applied to $S$
and $U$ returns the following result.
$$\begin{array}{rll}\simp_{\simpextlanguage}^U(S) = \{
& \leftarrow
w(\param a),\\
& \leftarrow p(\param a, Y)\land \lnot \exists(T,Z) [p(\param a,
Z)\land p(T, Z)\land w(T)]&\}
\end{array}$$
This coincides with the result given in \cite{LD98}, rewritten with
our notation, with the only difference that they do not assume
disjointness of $\idb$ and $\edb$, so, in the latter extended
denial, they have the extra conjunct $\lnot \exists
V[married\_to(\param a,V)]$.
\end{example}

\section{Related work}\label{sec:related}

The idea of simplifying integrity constraints has been long recognized, dating back to at least
\cite{Nic82,DBLP:conf/sigmod/BernsteinB82}, and then elaborated by
several other authors, e.g.,
\cite{HMN84,DBLP:conf/sigmod/HsuI85,LST87,Qia88,SK88,CGM90,DC94,LL96,LD98,
SS99,Dec02}.
We continued this line of research in~\cite{M:PHD2005,DBLP:conf/lopstr/ChristiansenM03}.
In particular, in~\cite{M:PHD2005} we considered an expressive language, $\simpextlanguage$, for formulating schemata and constraints, for which we could provide a sound and terminating simplification procedure.

The simplification problem is also tightly connected to Query Containment (QC)~\cite{chandra1977optimal}, i.e., the problem of establishing whether, given two queries $q_1$ and $q_2$, the answer to $q_1$ is always a subset of the answer to $q_2$, for all database instances. Indeed, ``perfect'' simplifications may be generated if and only if QC is decidable. Unfortunately, QC is already undecidable for $\datalog$ databases without negation~\cite{shmueli87,DBLP:conf/er/CaliM08}.

The kind of integrity constraints considered here
are the so-called \emph{static} integrity constraints, in that they refer to 
properties that must be met by the data in each database state.
\emph{Dynamic} constraints, instead,
are used to impose restrictions on the way the database states can
evolve over time, be it on successive states or on arbitrary sets of states.
Dynamic constraints have been considered, e.g., in
\cite{DBLP:journals/tods/Chomicki95,DBLP:conf/vldb/CowleyP00}.

Another important distinction can be made between
\emph{hard} (or
\emph{strong})
constraints and \emph{soft}
(or \emph{deontic})
 constraints.
The former ones are used to model necessary requirements of the
world that the database represents, like the constraints studied here.
Deontic constraints govern what is obligatory but not necessary of
the world, so violations of deontic constraints correspond to
violations of obligations that have to be reported to the user or
administrator of the database rather than inconsistencies proper.
The wording ``soft constraint'' sometimes refers to a condition that
should preferably be satisfied, but may also be violated. Deontic constraints have
been considered, e.g., in \cite{DBLP:journals/fuin/CarmoDJ01} and
soft constraints in \cite{375749}.
Soft constraints are akin to preferences, covered by an abundant literature~\cite{DBLP:journals/jacm/CiacciaMT20}.

Other kinds of constraints, of a more structural nature, are the so-called access constraints, or access patterns, whose interaction with integrity constraints proper has also been studied~\cite{DBLP:conf/icdt/BaranyBB13}.

Often, integrity constraints are characterized as \emph{dependencies}. Many common dependencies are used in database practice. Among those involving a single relation, we mention \emph{functional dependencies}~\cite{Cod72} (including \emph{key dependencies}) and \emph{multi-valued dependencies}~\cite{DBLP:journals/tods/Fagin77,DBLP:journals/jacm/Fagin82}. For the inter-relation case, the most common ones are \emph{inclusion dependencies} (i.e., \emph{referential
constraints}).
All of these can straightforwardly be represented as (extended denials).
In particular, many authors acknowledge tuple-generating dependencies (TGDs) and equality-generating dependencies (EGDs) to be the most important types of dependencies in a database, since they encompass most others~\cite{DBLP:journals/jacm/BeeriV84} and are also commonly used for schema mappings in Data Exchange~\cite{DBLP:journals/tcs/FaginKMP05}.
A TGD is a formula of the form $\forall \vec X(\phi(\vec X)\rightarrow\exists \vec Y\psi(\vec X,\vec Y))$, where $\phi(\vec X)$ is a conjunction of atomic formulas, all with variables among the variables in $\vec X$; every variable in $\vec X$ appears in $\phi(\vec X)$ (but not necessarily in $\psi$); $\psi(\vec X, \vec Y)$ is a conjunction of atoms, all with variables among $\vec X$ and $\vec Y$.
Clearly, such a TGD can be expressed with our notation as the extended denial $\leftarrow \phi(\vec X)\land\lnot\exists\vec Y \psi(\vec X,\vec Y)$.
EGDs are formulas of the form $\forall \vec X(\phi(\vec X)\rightarrow X_1=X_2)$, where $X_1$ and $X_2$ are variables in $\vec X$. Clearly, this is expressed as a denial $\leftarrow \phi(\vec X)\land X_1\neq X_2$.
So our framework captures the most representative cases of constraints used in the literature.

More detailed classifications of integrity constraints for deductive
databases have been attempted by several authors, e.g.,
\cite{DBLP:books/cs/Ullman88,DBLP:conf/vldb/Grefen93}.

Once illegal updates are detected, it must be decided how to restore
a consistent database state. We targeted a ``prevention'' approach, that avoids illegal updates completely. Other approaches, instead, need to restore consistency via corrective actions after an illegal update -- typically a rollback, but several works compute a r\emph{repair} that changes, adds or deletes tuples of the database in order to satisfy the integrity constraints again. 
The generation of repairs is a nontrivial
issue; see, e.g.,
\cite{DBLP:conf/pods/ArenasBC99,773179,DBLP:conf/foiks/ArieliDNB04}, and~\cite{DBLP:series/synthesis/2011Bertossi,DBLP:conf/pods/Bertossi19} for surveys on the topic.

In some scenarios, a temporary violation of integrity constraints
may be accepted provided that consistency is quickly repaired; if,
by nature of the database application, the data are particularly
unreliable, inconsistencies may be even considered unavoidable.
Approaches that cope with the presence of inconsistencies have been studied and gave rise to \emph{inconsistency-tolerant integrity checking}~\cite{DBLP:conf/lpar/DeckerM06,DBLP:conf/dexaw/DeckerM06,DBLP:conf/dexaw/DeckerM07,DBLP:conf/ppdp/DeckerM08}.
Besides being checked and tolerated, inconsistency can also be measured, and numerous indicators have been studied to address this problem~\cite{DBLP:journals/jiis/GrantH06,DBLP:conf/er/DeckerM09,DBLP:conf/ijcai/GrantH11,DBLP:conf/ecsqaru/GrantH13,DBLP:journals/ijar/GrantH17,DBLP:journals/ijar/GrantH23}.

An orthogonal research avenue is that of allowing inconsistencies to occur in databases but to
filter the data during query processing so as to provide a
\emph{consistent query answer}~\cite{DBLP:conf/pods/ArenasBC99}, i.e., the set of tuples that
answer a query in all possible repairs (of course without actually
having to compute all the repairs).

Integrity checking has been applied to a number of different contexts, including data integration~\cite{DBLP:conf/foiks/ChristiansenM04}, the presence of aggregates~\cite{DBLP:conf/fqas/Martinenghi04,DBLP:journals/access/SamarinA21}, 
the interaction with transaction management~\cite{DBLP:conf/dexa/MartinenghiC05},
the use of symbolic constraints~\cite{DBLP:journals/aai/ChristiansenM00},
approximate constraints~\cite{DBLP:conf/icdt/KenigS20}
big data~\cite{DBLP:journals/tnse/YuHYL21} and data clouds~\cite{DBLP:conf/wasa/YangTWZ20}.

We also observe that integrity checking is also commonly included as a typical part of complex data preparation pipelines for subsequent processing based, e.g., on Machine Learning or Clustering algorithms~\cite{DBLP:conf/fqas/Masciari09,DBLP:journals/isci/MasciariMZ14,DBLP:conf/ideas/FazzingaFMF09,DBLP:journals/tods/FazzingaFFM13} as well as crowdsourcing applications~\cite{DBLP:conf/socialcom/GalliFMTN12,DBLP:conf/www/BozzonCCFMT12}.

\section{Conclusion}

We applied program transformation operators to the generation of simplified integrity constraints, targeting an expressive class termed extended denials, which includes negated existential quantifiers.
We believe that this is an important class, as it encompasses very common dependencies such as tuple-generating dependencies and equality-generating dependencies.

An immediate application of our operators is an automated process that, upon requests from an application, communicates with the database and transparently carries out the required simplified
integrity checking operations. The would imply benefits in
terms of efficiency and could leverage a compiled approach, since
simplifications can be generated at design time.

Although we used a logical notation throughout the thesis, standard
ways of translating integrity constraints into SQL exist, although
further investigation is needed in order to handle additional
language concepts of SQL like null values.
In \cite{Dec02}, Decker showed how to implement integrity constraint
checking by translating first-order logic specifications into SQL
triggers. The result of our transformations can be combined with
similar translation techniques and thus integrated in an active
database system, according to the idea of embedding integrity
control in triggers.
In this way the advantages of declarativity are combined with the
efficiency of execution.

Other possible enhancements of the proposed framework may be
developed using statistical measures on the data, such as estimated
relation sizes and cost measurements for performing join and union
operations. Work
in this area is closely related to methods for dynamic query
processing, e.g.,
\cite{DBLP:conf/sigmod/ColeG94,DBLP:conf/sigmod/SeshadriHPLRSSS96}.

The proposed procedure is guaranteed to terminate, as we approximated entailment with rewrite rules based on resolution, subsumption and replacement of specific patterns. While~\cite{M:PHD2005} discusses a few cases in which, besides termination, the procedure guarantees also completeness, it would be interesting to pinpoint more specifically what is left out in more expressive cases.

Another line of research regards the combination of extended denials with other expressive scenarios, also individually described in~\cite{M:PHD2005}, such as the addition of aggregates and arithmetic built-ins.
While all these extensions could be trivially handled by a rule set comprising all rewrite rules defined for each specific scenario (such rules are not mutually exclusive), it should be interesting to study whether further improvements can be obtained by exploiting the interaction between these rules.

\bibliographystyle{abbrv}

\end{document}